\newcommand{\F}{\mathbb{F}}
\newcommand{\Q}{\mathbb{Q}}
\newcommand{\Z}{\mathbb{Z}}
\def\cA{{\mathcal A}}
\def\cB{{\mathcal B}}
\def\cC{{\mathcal C}}
\def\cL{{\mathcal L}}
\def\cS{{\mathcal S}}
\def\({\left(}
\def\){\right)}
\def\[{\left[}
\def\]{\right]}
\def\<{\langle}
\def\>{\rangle}
\def\fl#1{\left\lfloor#1\right\rfloor}
\def\rf#1{\left\lceil#1\right\rceil}
\def\mand{\qquad\mbox{and}\qquad}
\def\vec#1{\mathbf{#1}}
\newtheorem{theorem}{Theorem}
\newtheorem{example}[theorem]{Example}
\newtheorem{rem}[theorem]{Remark}
\newtheorem{corollary}[theorem]{Corollary}
\newtheorem{algol}[theorem]{Algorithm}
\numberwithin{equation}{section}
\numberwithin{theorem}{section}
\begin{document}

\title{Codes correcting restricted errors}

\author{Igor E. Shparlinski}
\address{School of Mathematics and Statistics, University of New South Wales,
 Sydney NSW 2052, Australia}
\email{igor.shparlinski@unsw.edu.au}

\author{Arne Winterhof}

\address{Johann Radon Institute for Computational and Applied Mathematics, Austrian Academy of Sciences, Altenberger Str.\ 69, Linz, Austria}
\email{arne.winterhof@oeaw.ac.at}

\begin{abstract}
 We study the largest possible length $B$ of $(B-1)$-dimensional linear codes over $\F_q$ which can correct up to $t$ errors taken from a restricted set $\cA\subseteq \F_q^*$.
 Such codes can be applied to multilevel flash memories. 
 
 Moreover, in the case that $q=p$ is a prime and the errors are limited by a constant we show that often the primitive $\ell$th roots of unity, where $\ell$ is a prime divisor of $p-1$, 
 define good such codes. 
\end{abstract}

\keywords{Restricted errors, packing sets}
\subjclass[2010]{68P30, 94B05, 94B65}

\maketitle

\section{Introduction}

Let $\F_q$ be the finite field of $q$ elements, $\cA$ be a nonempty subset of $\F_q^*=\F_q\setminus\{0\}$ 
and $t$ be a positive integer.
We call a subset $\cB\not=\emptyset$ of $\F_q^*$ 
of size $B= \# \cB$   a {\em $(t,\cA,q)$-packing set} if 
for any $x\in \F_q$ there is at most one solution 
$$
\vec{a}=(a_b)_{b\in \cB}\in (\cA\cup \{0\})^{B}
$$ 
to the equation in $\F_q$ 
$$\sum_{b\in \cB} a_b b=x
$$
with Hamming-weight $w(\vec{a})\le t$, that is, $\vec{a}$ has at most $t$ nonzero coordinates.

We can use the elements of $\cB$ to define a $(B-1)$-dimensional linear code $\cC$ of length $B$ with the one line parity check matrix $H=(b)_{b\in \cB}$:
$$
\cC=\left\{(c_b)_{b\in \cB}\in \F_q^{B} : \sum_{b\in \cB} c_bb=0\right\}.$$
Using nearest neighbor decoding without further knowledge about the errors such a code of minimum weight at most  $2$ (by the Singleton bound)
cannot correct any error, see for example \cite{be,lixi,li,mc}. However, if we assume that all occurring errors $a$ are elements of $\cA$, we can correct up to $t$ errors.
More precisely, the syndromes
$$S_{\cB}(\vec{a})=\sum_{b\in \cB} a_bb$$
of any error  
$\vec{a}=(a_b)_{b\in \cB}\in (\cA\cup \{0\})^{B}$ of Hamming weight at most $t$ are all distinct and we can determine the unique error $\vec{a}$
from a syndrome table since $S_{\cB}(\vec{c}+\vec{a})=S_{\cB}(\vec{a})$ for every code word ${\bf c}\in \cC$ uniquely determines~$\vec{a}$. 
Note that the information rate $(B-1)/B$ of these codes improves with the size $B$ of $\cB$ and we are interested in $(t,\cA,q)$-packing sets of large size.

A particularly interesting case is 
\begin{equation}\label{eq:special}
 \cA=\{1,2,\ldots,\lambda\}\quad \mbox{and }\quad \mbox{$q=p$ is a prime},
\end{equation}
that is, we can correct all errors of limited magnitude $\lambda$, where $\lambda$ is a positive integer. Such codes have been first 
proposed in~\cite{ah} which in turn is based on~ \cite{var}. 
They are used for multilevel flash memories, 
see~\cite{ca,kl,kl2}.

In Section~\ref{sec:upper} we present an upper bound on the size of any $(t,\cA,q)$-packing set $\cB$ with any 
nonempty  set  $\cA \subseteq \F_q^*$, which is a simple generalization  of the upper bound of~\cite{kl} for the special case~\eqref{eq:special}. 
We also provide examples of sets $\cA$ for which this bound is tight.

In Section~\ref{sec:lower} we prove a lower bound on the size of any {\em maximal} $(t,\cA,q)$-packing set $\cB$, that is, for all $u\in \F_q^*\setminus \cB$ the set $\cB \cup \{u\}$ is not a $(t,\cA,q)$-packing set.
Note that this does not imply that there is no $(t,\cA,q)$-packing set of larger size than $B$ (but such larger sets 
cannot contain $\cB$). 

In Section~\ref{sec:semi} we consider the case~\eqref{eq:special} again and
give a probabilistic construction for a dense sequence of reasonably large $(t,\cA,q)$-packing sets.
This construction is based on some properties of cyclotomic polynomials and resultants. 

Unfortunately there is no efficient decoding procedure for the codes based  
on packing sets. In the case of the sets~\eqref{eq:special} it may be possible to employ some geometry of numbers algorithms, for example for the shortest vector problem (in the $L_\infty$-norm).  
However no precise results or algorithms seem to be known. 

One can also consider generalisations with several sets $\cB_1, \ldots, \cB_k \subseteq \F_q$ of the same cardinality $B$, 
for which the vectors of 
syndromes 
$$
\(S_{\cB_1}(\vec{a}), \ldots, S_{\cB_1}(\vec{a})\), \qquad \vec{a} \in  (\cA\cup \{0\})^{B}, 
$$
are pairwise distinct. The counting arguments of Sections~\ref{sec:upper} and~\ref{sec:lower} extend to 
this case without any difficulties, however we do not see how to generalise the construction of Section~\ref{sec:semi}.

\section{An upper bound}
\label{sec:upper}

Now we present an upper bound on the size of any $(t,\cA,q)$-packing set which can be essentially found in \cite{kl}. We include its proof for the convenience of the reader.
 
\begin{theorem}
\label{thm:upper}
 Let $\cA$ be a subset of $\F_q^*$ of cardinality $A\ge 1$ and let $\cB$ be any $(t,\cA,q)$-packing set  of cardinality $B$. Then
 we have
 $$
\sum_{j=0}^t \binom{B}{j}A^j \le q.
 $$
\end{theorem}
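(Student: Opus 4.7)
The plan is to use a straightforward sphere-packing / pigeonhole argument, counting the number of distinct syndromes produced by low-weight error vectors and comparing it to $q = \#\F_q$.

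First, I would count the number of vectors $\vec{a} = (a_b)_{b\in\cB} \in (\cA\cup\{0\})^B$ of Hamming weight exactly $j$, for each $0 \le j \le t$. To specify such a vector, choose the $j$ positions in $\cB$ that carry nonzero entries (there are $\binom{B}{j}$ ways) and then independently assign to each such position an element of $\cA$ (there are $A^j$ ways). Hence the total number of vectors $\vec{a} \in (\cA\cup\{0\})^B$ with $w(\vec{a}) \le t$ equals
$$
N = \sum_{j=0}^t \binom{B}{j} A^j.
$$

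Second, I would invoke the packing property. By the very definition of a $(t,\cA,q)$-packing set, the map $\vec{a} \mapsto S_\cB(\vec{a}) = \sum_{b \in \cB} a_b b$ is injective on the set of vectors with Hamming weight at most $t$: indeed, if two such vectors produced the same syndrome $x \in \F_q$, they would both be solutions to $\sum_{b\in\cB} a_b b = x$ of weight $\le t$, contradicting uniqueness. Consequently the $N$ vectors counted above give rise to $N$ distinct elements of $\F_q$.

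Finally, since $\F_q$ has exactly $q$ elements, we obtain $N \le q$, which is the claimed inequality. There is no genuine obstacle here; the only thing to be careful about is checking that the zero vector is counted (it contributes the $j=0$ term, with syndrome $0$) and that the weight-$j$ count $\binom{B}{j} A^j$ correctly treats the choices of position and nonzero value as independent.
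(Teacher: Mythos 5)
Your proof is correct and follows essentially the same sphere-packing argument as the paper: count the $\sum_{j=0}^t \binom{B}{j}A^j$ error vectors of weight at most $t$, observe that the packing property forces their syndromes to be pairwise distinct, and bound this count by $q$. Nothing further is needed.
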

\begin{proof}
Note that the number of possible errors $\vec{a}\in (\cA\cup\{0\})^{B}$ of Hamming weight at most $t$ is
\begin{equation}\label{eq:M}
M=\sum_{j=0}^t \binom{B}{j}A^j,
\end{equation}
which equals the number 
$$
N = \# \left\{ S_{\cB}(\vec{a})~:~\vec{a}\in (\cA\cup\{0\})^{B}\right\}
$$
of corresponding syndromes $S_{\cB}(\vec{a})$,
which are pairwise 
distinct since $\cB$ is a $(t,\cA,q)$-packing set. Using the trivial bound 
$M = N \le q$ we derive the desired bound.
\end{proof} 

\begin{rem}{\rm
It is easy to see that 
$$
\sum_{j=0}^t \binom{B}{j}A^j \ge  \binom{B}{t}A^t > \left(\frac{B-t}{t}\right)^t A^t.
$$
Hence,  Theorem~\ref{thm:upper} implies the bound
 \begin{equation}\label{B}
 B< t\(\frac{q^{1/t}}{A}+1\).
 \end{equation}
}\end{rem}

\begin{example} 
{\rm
 Take $\cA=\F_p^*$, $q=p^B$ and choose $\cB$ to be
  a basis of $\F_q$ over $\F_p$.
 Then each element $x\in \F_q$ has a unique representation 
 $$x=\sum_{b\in \cB} a_b b, \qquad a_b\in \cA\cup \{0\}.
 $$ 
 The number of such elements with at most $t$ nonzero coefficients $a_b$, where $1\le t\le B$, is
 given by $M$ as in~\eqref{eq:M}. Thus, since $\cB$ is a basis of $\F_q$ over $\F_p$,  for $t=B$ we have
 $$
 \sum_{j=0}^{t}  \binom{B}{j} A^j = \sum_{j=0}^B  \binom{B}{j} A^j = (A+1)^B=q
 $$
 and the bound of Theorem~\ref{thm:upper} is attained.
}
\end{example}

\begin{example}  
{\rm Take $q=p$ a prime and $\cA=\{1,2,\ldots,\lambda\}$. 
Set 
$$
L = \fl{\frac{\log p}{\log(\lambda+1)}}.
$$
Then the set 
 $$\cB=\{(\lambda+1)^i: i=0,\ldots,L-1\}$$
 is a $(t,\cA,p)$-packing set for any $t$ with  $2\le t\le L$ since
 the $(\lambda+1)$-adic representation of any integer is unique and the largest exponent~$i$ is chosen such that there is no modulo $p$ reduction. 
 For $t=L$ the bound~\eqref{B} is attained up to a multiplicative constant.
 This is essentially mentioned in \cite{kl}.
}
\end{example}

\section{A lower bound}
\label{sec:lower}
  
  In this section we prove a lower bound on the size of any maximal $(t,\cA,q)$-packing set.

   \begin{theorem}\label{thm:lower}
   Let $\cA$ be a subset of $\F_q^*$ of size $A\ge 1$ and $\cB$ be any maximal $(t,\cA,q)$-packing set of size $B$. Then we have
   $$\sum_{h=0}^t \binom{B+1}{h}A^h \, \sum_{k=1}^t\binom{B}{k-1}A^k +B+1\ge q.$$
  \end{theorem}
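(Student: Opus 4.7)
The plan is to exploit the maximality of $\cB$ to reduce the theorem to a counting problem on ``bad augmenting'' elements. For each $u\in\F_q^*\setminus\cB$, the set $\cB\cup\{u\}$ fails to be a $(t,\cA,q)$-packing set, so there exist two distinct vectors $\vec{a},\vec{a}'\in(\cA\cup\{0\})^{B+1}$ (indexed by $\cB\cup\{u\}$), each of Hamming weight at most $t$, with the same syndrome. Writing $\vec{a}=(\vec{b},a_u)$ and $\vec{a}'=(\vec{b}',a_u')$ with $\vec{b},\vec{b}'\in(\cA\cup\{0\})^B$, the syndrome identity on $\cB\cup\{u\}$ becomes
$$
(a_u-a_u')\,u \;=\; \sum_{b\in\cB}(a_b'-a_b)\,b,
$$
and since $|\F_q^*\setminus\cB|=q-1-B$, an upper bound on the number of $u$ that can arise this way will give the stated inequality.

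The key structural observation, which is where the hypothesis that $\cB$ itself is a $(t,\cA,q)$-packing set enters, is that every bad pair satisfies $a_u\ne a_u'$. Indeed, if $a_u=a_u'$ then $\vec{b}\ne\vec{b}'$ (since $\vec{a}\ne\vec{a}'$) while $S_\cB(\vec{b})=S_\cB(\vec{b}')$, both of weight at most $t$, contradicting that $\cB$ is a packing set. Consequently at least one of $a_u,a_u'$ is nonzero; swapping the roles of $\vec{a}$ and $\vec{a}'$ if necessary, I may assume $a_u\in\cA$, which also forces the restriction $\vec{b}$ to have weight at most $t-1$.

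With these conventions the scalar $a_u-a_u'$ is nonzero, so the displayed identity determines $u$ uniquely from the triple $(\vec{a}',a_u,\vec{b})$. Hence the number of bad $u$ is at most the number of admissible triples: an arbitrary weight-$\le t$ vector $\vec{a}'\in(\cA\cup\{0\})^{B+1}$ contributes $\sum_{h=0}^{t}\binom{B+1}{h}A^h$ choices, $a_u\in\cA$ contributes $A$ choices, and $\vec{b}\in(\cA\cup\{0\})^B$ of weight at most $t-1$ contributes $\sum_{k=0}^{t-1}\binom{B}{k}A^k$ choices. Absorbing the factor $A$ by the index shift $k\mapsto k-1$ rewrites the product as
$$
\sum_{h=0}^{t}\binom{B+1}{h}A^h\cdot\sum_{k=1}^{t}\binom{B}{k-1}A^k,
$$
and combining with $q-1-B\le$ (number of bad $u$) yields the claim after moving $B+1$ to the right.

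The only genuinely delicate point in executing this plan is justifying the reduction to $a_u\in\cA$: this rests on the observation $a_u\ne a_u'$ in every bad pair, which is exactly the spot where the packing property of $\cB$ itself is used. Everything else is routine counting, and there is no real obstacle once the structural observation has been made.
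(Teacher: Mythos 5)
Your proposal is correct and follows essentially the same route as the paper: use maximality to produce a colliding pair on $\cB\cup\{u\}$, invoke the packing property of $\cB$ to get $a_u\ne a_u'$ so that $u$ is determined by the pair, and count the pairs with the nonzero $u$-coordinate normalized to one side, which gives exactly the factors $\sum_{h=0}^t\binom{B+1}{h}A^h$ and $\sum_{k=1}^t\binom{B}{k-1}A^k$. No substantive differences to report.
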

  
\begin{proof} Assume $\cB$ is a maximal $(t,\cA,p)$-packing set and $u\not\in \cB$. 
Then for any $u \in  \F_q^*\setminus \cB$ the  set $\cB \cup\{u\}$ is not a  $(t,\cA,p)$-packing set 
and thus we  have 
 $$\sum_{b\in \cB}a_{1,b}b+a_{1,u} u= \sum_{b\in \cB}a_{2,b}b+a_{2,u} u
 $$
 for some 
 $$
 \vec{a}_\nu=(a_{\nu,b})_{b\in \cB\cup\{u\}} \in (\cA\cup\{0\})^{B+1}, \qquad \nu =1,2,
 $$  
 of Hamming weight at most $t$.
 
 In particular, by the definition of a $(t,\cA,q)$-packing set, 
  we have $a_{1,u}\ne a_{2,u}$ and may assume $a_{2,u}\ne 0$. Therefore
 $$
 u=\(a_{1,u} - a_{2,u}\)^{-1}\sum_{b\in \cB}(a_{1,b}-a_{2,b})b.
 $$
 We fix some $h$ and $k$ with $0\le h\le t$ and $1\le k \le t$. Further we assume 
 $$
 w(\vec{a}_1)=h \mand w({\bf a}_2)=k.
 $$
 There are 
 $$\binom{B+1}{h} A^h \mand \binom{B}{k-1}A^k
 $$ 
 choices for   $\vec{a}_1$ and  $\vec{a}_2$, respectively. 
  Hence the number $q-B-1$ of $u\in \F_q^*\setminus \cB$ is bounded by
$$
q-B-1 \le \sum_{h=0}^t \binom{B+1}{h}A^h \, \sum_{k=1}^t  \binom{B}{k-1}A^k
$$
  and the result follows.
\end{proof}

\begin{rem}  
{\rm
   For $t=1$ we get the more precise bound
  \begin{equation}\label{t=1} B\ge \frac{q-1}{\#(\cA/\cA)}\ge \frac{q-1}{A^2},
  \end{equation}
  where $\cA/\cA=\{ab^{-1}: a,b\in \cA\}$ denotes the ratio set of $\cA$, see~\cite[Proposition~2.4]{bur} 
  or~\cite{shrowi}.  Furthermore, in~\cite{shrowi}, an example is given which attains this lower bound
  up to a multiplicative constant. We recall this construction for the convenience of the reader:
  Let $g\in \F_q^*$ be an element of order $k\ge 2$, put $d=\rf{\sqrt{k}}\ge 2$ and choose
  $$\cA=\left\{g,g^2,\ldots,g^d,g^{2d},\ldots,g^{(d-1)d},g^{d^2}\right\}.$$
  Note that $\#\cA\le 2d-1$ and $\cA/\cA$ is the subgroup of $\F_q^*$ of order $k$ generated by $g$.
  Now suppose that $\cB$ is any $(1,\cA,q)$-packing set, that is, 
  $$\cA/\cA\cap \cB/\cB=\{1\}.$$
  Then $\cB$ cannot contain more than one element from each coset of $\cA/\cA$ and thus
  $$B\le (q-1)/k=O\left(q/A^2\right).$$
}
\end{rem}

 \begin{corollary}
 \label{cor:low bound}
   Let $\cA$ be a subset of $\F_q^*$ of size $A$ and $\cB$ be any maximal $(t,\cA,q)$-packing set of size $B$. Then we have
$$B> \frac{(q/(5A))^{1/(2t-1)}}{A}-1.
$$
  \end{corollary}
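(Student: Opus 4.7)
The plan is to derive Corollary~\ref{cor:low bound} by reducing the two sums in Theorem~\ref{thm:lower} to a single clean power. I would set $X = A(B+1)$; since $\cA$ and $\cB$ are nonempty we have $A,B\ge 1$, hence $X\ge 2$. The strategy is to bound each of the two factors in Theorem~\ref{thm:lower} by a small constant times the appropriate power of $X$, then add back the $B+1$ term and solve for $B$.

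For the first factor, I would use the crude estimate $\binom{B+1}{h}\le (B+1)^h$ to get
$$
\sum_{h=0}^t \binom{B+1}{h}A^h \;\le\; \sum_{h=0}^t X^h \;=\; \frac{X^{t+1}-1}{X-1} \;\le\; 2X^t,
$$
the last step using $X\ge 2$ (so $X-1\ge X/2$). The same style of bound $\binom{B}{k-1}\le B^{k-1}\le (B+1)^{k-1}$ together with pulling out one factor of $A$ gives
$$
\sum_{k=1}^t \binom{B}{k-1}A^k \;\le\; A\sum_{j=0}^{t-1} X^j \;\le\; 2AX^{t-1}.
$$
Multiplying these yields $S_1S_2\le 4A X^{2t-1}$. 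Finally, since $B+1=X/A$ and $A\ge 1$, $X\ge 1$, $2t-1\ge 1$, we also have $B+1\le AX^{2t-1}$, which absorbs the stray $B+1$ term in Theorem~\ref{thm:lower}.

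Combining, Theorem~\ref{thm:lower} gives $q\le (B+1)+S_1S_2\le 5AX^{2t-1} = 5A(A(B+1))^{2t-1}$, and rearranging is exactly $B\ge (q/(5A))^{1/(2t-1)}/A - 1$. There is essentially no obstacle, only bookkeeping; the one design choice that matters is taking $X=A(B+1)$ rather than $AB$, because this is what makes both the $B+1$ term and the second sum (which naturally lives on $BA$) bound by the same $X$. To get the strict inequality claimed, I would note that $\binom{B+1}{h}<(B+1)^h$ for $h\ge 2$ (and for $t=1$ one checks $1+X<2X$ since $X\ge 2$), so at least one of the intermediate inequalities is strict, promoting the final $\ge$ to $>$.
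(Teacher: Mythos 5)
Your proof is correct and takes essentially the same route as the paper's: bound each binomial coefficient by the corresponding power, sum the resulting geometric series, and use that the base is at least $2$ to collapse each sum to twice its top term, yielding $q<5A\bigl(A(B+1)\bigr)^{2t-1}$. The only (minor) difference is that by working with $X=A(B+1)\ge 2$ throughout you sidestep the paper's preliminary reduction to $t\ge 2$, $q\ge 11$ and $AB\ge 2$, and you also make the strictness of the final inequality explicit.
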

  
  \begin{proof}
  By~\eqref{t=1} we may assume that $t\ge 2$ and 
  since otherwise the result is trivial we may assume $q\ge 11$ and thus by Theorem~\ref{thm:lower} 
  we have $AB\ge 2$. Now, from Theorem~\ref{thm:lower} and the elementary inequalities 
  \begin{align*}
  \sum_{h=0}^t \binom{B+1}{h}A^h &< \sum_{h=0}^t (A(B+1))^h\\
&   = \frac{(A(B+1))^{t+1}-1}{A(B+1)-1} \le 2\(A(B+1)\)^{t}, \\
 \sum_{k=1}^t  \binom{B}{k-1}A^k &< \frac{1}{B} \sum_{k=1}^t (AB)^k = A\frac{(AB)^t-1}{AB-1} \le  2 A^t B^{t-1}, 
 \end{align*} 
we derive  $q-B-1<  4 A^{2t}(B+1)^{2t-1}$ and the desired bound follows.
\end{proof}

\begin{rem}{\rm
Note that if, for instance, $A< (q/6)^{1/(2t)}$, then the first term in the lower bound of Corollary~\ref{cor:low bound}
dominates and it becomes of order of magnitude $q^{1/(2t-1)}  A^{-2t/(2t-1)}$. 
}\end{rem}

 \section{A Probabilistic Construction}
 \label{sec:semi}
  
 Now we consider the special case~\eqref{eq:special}  and present a probabilistic construction, 
 which for every $\lambda$,  $t$ and a sufficiently large positive $Q$, produces a prime $p \in [Q, 2Q]$
 and  a $(t,\cA,p)$-packing set $\cB$ of large cardinality. 
 
 We describe our construction first. 
 
 \begin{algol} Given arbitrary positive integers $\lambda$  and  sufficiently large positive integers   $K<Q/2$:
\label{alg:cyclot}
\begin{enumerate}
\renewcommand{\labelenumi}{Step~\arabic{enumi}.}

\item Choose a random integer $k \in [K+1, K+K/\log K]$ and test $k$ for primality. 
Repeat this step until a prime $\ell = k$ is found.

\item  Choose a random factored integer $m \in [M,2M]$, where $M = (Q-1)/\ell$, 
and test  $m\ell + 1$ for primality. 
Repeat this step until a prime $p = m\ell + 1$ is found.

\item  Choose a random element $a \in \F_p^*$ and using the knowledge of the factorisation 
of $p-1$ test it  for being a primitive root of $\F_p^*$.  Repeat this step until a primitive 
root $g \in \F_p^*$ is found.
\item Return $b_0 = g^{(p-1)/\ell}$.
\end{enumerate}
\end{algol}

\begin{theorem}\label{thm:constr} Assuming that 
\begin{equation}
\label{eq:KQ}
 \lambda^{2t} (4K)^{2t+2} \log(t \lambda) = o(Q), 
\end{equation}
Algorithm~\ref{alg:cyclot} runs in expected polynomial in $\log Q$  time and with probability 
$1 + o(1)$ returns $b_0 \in \F_p^*$ for which the set   $\cB_0 = \{ b_0,b_0^2, \ldots, b_0^{\ell-1}\} \subseteq \F_p^*$ 
with $\ell = (1+o(1)) K$
is a $(t,\cA,p)$-packing set, where $\cA$ is as in~\eqref{eq:special}.
  \end{theorem}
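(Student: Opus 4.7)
The plan is to reduce the packing property to a resultant-divisibility condition and bound the failure probability over the random prime $p$. It suffices to show that for every nonzero $\vec{c}=(c_1,\ldots,c_{\ell-1})\in\{-\lambda,\ldots,\lambda\}^{\ell-1}$ of Hamming weight at most $2t$, one has $g_{\vec{c}}(b_0):=\sum_{i=1}^{\ell-1}c_i b_0^i\not\equiv 0\pmod p$; indeed any failure of the packing property yields such a $\vec{c}$ as the difference of two error vectors in $(\cA\cup\{0\})^{\ell-1}$ of weight $\le t$. Since $b_0$ is a primitive $\ell$-th root of unity in $\F_p$, it is a root of the cyclotomic polynomial $\Phi_\ell(x)=1+x+\cdots+x^{\ell-1}$ modulo $p$, so $g_{\vec{c}}(b_0)\equiv 0\pmod p$ forces $p\mid R_{\vec{c}}$ with $R_{\vec{c}}:=\Res(g_{\vec{c}},\Phi_\ell)\in\Z$. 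This resultant is nonzero: $\Phi_\ell$ is irreducible over $\Q$, so a common root in $\overline{\Q}$ would force $g_{\vec{c}}=c\Phi_\ell$ for some $c\in\Z$, contradicting $g_{\vec{c}}(0)=0$ and $\Phi_\ell(0)=1$ for any nonzero $\vec{c}$.

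Next I would bound $|R_{\vec{c}}|$ using that $\Phi_\ell$ is monic with all roots on the unit circle:
$$
|R_{\vec{c}}|=\prod_\zeta|g_{\vec{c}}(\zeta)|\le(2t\lambda)^{\ell-1},
$$
so $R_{\vec{c}}$ admits at most $(\ell-1)\log(2t\lambda)/\log Q$ prime divisors in $[Q,2Q]$. The number of admissible vectors $\vec{c}$ is bounded by $\sum_{k=1}^{2t}\binom{\ell-1}{k}(2\lambda)^k\le(4\lambda\ell)^{2t}$. Combining this with a lower bound $\gg Q/(\ell\log Q)$ on the count of primes $p\in[Q,2Q]$ with $\ell\mid p-1$ (Siegel--Walfisz for small $\ell$, or a Bombieri--Vinogradov-type input otherwise), the union bound yields failure probability
$$
O\!\left(\frac{(4\lambda\ell)^{2t}\,\ell^2\log(t\lambda)}{Q}\right)=o(1)
$$
by~\eqref{eq:KQ}, so with probability $1+o(1)$ the output $b_0$ yields a $(t,\cA,p)$-packing set $\cB_0$.

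For the running time, Step~1 finds the prime $\ell$ in $O(\log K)$ expected iterations by the prime number theorem. Step~2 uses Bach's algorithm to sample random factored $m\in[M,2M]$, so the factorization of $p-1=m\ell$ is available once the primality test on $m\ell+1$ succeeds, and the same density estimate for primes $\equiv 1\pmod\ell$ ensures polynomial expected time. Step~3 tests primitivity via the known factorization of $p-1$ and locates a primitive root in $O(\log\log p)$ expected iterations, since $\varphi(p-1)/(p-1)\gg 1/\log\log p$. Each step is therefore polynomial in $\log Q$ on average.

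The main obstacle will be the calibration of the resultant bound: the factor $(2t\lambda)^{\ell-1}$ is very large, and only the unit modulus of the roots of $\Phi_\ell$ together with the sparsity $w(\vec{c})\le 2t$ keeps it manageable; the exponent $\ell-1$ is then exactly offset by the logarithmic count of prime divisors of $R_{\vec{c}}$ per $\vec{c}$, and this cancellation is precisely what the hypothesis~\eqref{eq:KQ} is designed to absorb. A secondary technical difficulty is securing a usable lower bound on primes in the progression $1\pmod\ell$ when $\ell$ grows as a fractional power of $Q$, which may require Bombieri--Vinogradov-type results or a conditional input.
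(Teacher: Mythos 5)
Your proposal follows essentially the same route as the paper: reduce the packing property to the nonvanishing of sparse sums $\sum a_b b$ over difference vectors of weight at most $2t$, encode this as $p\nmid\Res(f,\Phi_\ell)$, show the resultant is a nonzero integer of size at most $(2t\lambda)^{\ell-1}$, and run a union bound over the $O((\lambda\ell)^{2t})$ admissible sparse polynomials against the supply of primes $p\equiv 1\pmod\ell$ in $[Q,2Q]$. Your bookkeeping is in fact slightly cleaner than the paper's, since you count only the prime divisors of $R_f$ lying in $[Q,2Q]$ (gaining a factor $\log Q$), and your alternative argument for $R_f\neq 0$ via the constant term (rather than the paper's device of dividing by $b_0$ so that $\deg f\le\ell-2<\deg\Phi_\ell$) is equally valid.

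The one point you leave genuinely open --- the lower bound $\gg Q/(\ell\log Q)$ for the number of primes $p\in[Q,2Q]$ with $p\equiv 1\pmod\ell$, which you need both for the expected running time of Step~2 and as the denominator in your union bound --- is closed in the paper as follows. The hypothesis~\eqref{eq:KQ} with $t\ge 1$ forces $K=O(Q^{1/3})$, so every modulus $\ell\le 2K$ lies comfortably within the Bombieri--Vinogradov range; no Siegel--Walfisz or conditional input is needed. The remaining subtlety is that Bombieri--Vinogradov yields the bound~\eqref{eq:many p} only for all but $O(L/\log Q)$ of the $L=(1+o(1))K/(\log K)^2$ primes $\ell$ in $[K+1,K+K/\log K]$; since Step~1 selects $\ell$ uniformly at random from this set, the exceptional moduli are hit with probability $O(1/\log Q)=o(1)$ and can be absorbed into the failure probability. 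You should make this ``almost all $\ell$'' caveat and its resolution via the randomness of $\ell$ explicit, as without it the progression count is not guaranteed for the particular $\ell$ your algorithm has fixed before Step~2 begins.
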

  
  \begin{proof}   
We first analyse the complexity of Algorithm~\ref{alg:cyclot}
and then show that it is correct with an overwhelming probability.  

\subsubsection*{Running time}  
It follows easily from the classical prime number theorem that  intervals of the form  $[K+1, K+K/\log K]$ contain 
a set $\cL$ of 
\begin{equation}
\label{eq:asymp L}
L =  (1+o(1)) \frac{K}{(\log K)^2} 
\end{equation}
primes, see~\cite[Theorem~10.5]{IwKow}
and the follow-up discussion, which is sufficient for our purpose. In fact the currently strongest result of
Baker, Harman and  Pintz~\cite{BaHaPi} allows to use $k\in[K+1, K+K^\alpha]$ for any fixed  $\alpha >  21/40=0.525$, but this
does not affect our main result.  
Combining this with the  deterministic polynomial time 
primality test of 
Agrawal, Kayal and  Saxena~\cite{AKS} (or with any polynomial time 
probabilistic test, see~\cite {CrPom})  we conclude that Step~1 returns a desired prime~$\ell$ in polynomial time.  

To generate  factored integers in the interval $[M,2M]$ uniformly at random, we use the   polynomial 
time  algorithm of Kalai~\cite{Kal}  which  simplified the previous algorithm of Bach~\cite{Bach}. 
After this we again apply one of the above primality tests to $m\ell+1$. We now need to estimate 
the expected number of such choices before we find a prime 
$$p = \ell m +1.$$

Let, as usual, $\pi(x,k,a)$ denote the number of primes $p\le x$ in the arithmetic progression
$p \equiv a \pmod k$. 

We now recall that since $t \ge 1$ then by~\eqref{eq:KQ} we have $K = O(Q^{1/3})$. Thus by the celebrated Bombieri-Vinogradov theorem, see~\cite[Theorem~17.1]{IwKow}, with the summation extended
only over the primes from the set $\cL$, we immediately derive the following bound 
\begin{equation}
\label{eq:many p}
\pi(2Q,\ell,1)-   \pi(Q,\ell,1) \ge  \frac{Q}{2 \ell \log Q}
\end{equation}
for all but  at most $O\(L/\log Q\)$ primes $\ell \in \cL$. 

Since the primes $\ell$ are generated uniformly at random we see that in expected
polynomial time Step~2 outputs the desired prime $p$.

Since $\ell$ is prime and the prime number factorisation of $m$ is known, one can test 
whether $a \in \F_p^*$ is a primitive root of $\F_p^*$ in deterministic polynomial time.  
Recall that the density of primitive roots in $\F_p^*$ is high enough, 
namely, it is 
$$ 
 \frac{\varphi(p-1)}{p-1} \ge c \frac{1}{\log \log p}
$$ 
for an absolute constant $c>0$,
which easily follows from Mertens' formula, see~\cite[Equation~(2.16)]{IwKow}. We 
now immediately conclude that Step~3 outputs a primitive root of~$\F_p^*$ in 
 expected polynomial time. 

The complexity analysis of  Step~4 is trivial. 

\subsubsection*{Correctness}  Take $\cB_0 = \{  b_0,b_0^2, \ldots, b_0^{\ell-1}\} \subseteq \F_p^*$. 
 If 
 \begin{equation}
 \label{fail} 
 \sum_{b\in \cB_0}a_b b\not= 0
 \end{equation}
 for all $\vec{a}=(a_b)_{b\in \cB_0}\in \{-\lambda,-\lambda+1,\ldots,\lambda\}^{\ell-1}$ with $1\le w(\vec{a})\le 2t$,
 then $\cB_0$ is a $(t,\cA,p)$-packing set.

 Let $\Phi_\ell(X)\in \Z[X]$ be the $\ell$th cyclotomic polynomial which completely splits over $\F_p$, that is, 
 $\cB_0$ is exactly its set of zeros.

 If~\eqref{fail} fails,  then there is a nonzero  polynomial with at most $2t$ nonzero coefficients 
 $a_i\in \{-\lambda,\ldots,\lambda\}$ of the form
 \begin{equation}
 \label{eq:polys} 
 f(X)=\sum_ {i=0}^{\ell-2} a_i X^i\in \F_p[X]
 \end{equation}
 of degree at most $\ell-2$
 which vanishes at some  $b \in \cB_0$. 
 The resultant $R_f$ of $f(X)$ and $\Phi_\ell(X)$ is
 $$R_f=\prod_{\xi:~\Phi_\ell(\xi)=0}f(\xi),$$
 where the product is taken over all complex primitive $\ell$th roots of unity~$\xi$, 
see, for example,~\cite[Theorem~1]{bili10}, and
 vanishes modulo $p$:
 \begin{equation}
 \label{eq:Res div} 
 R_f \equiv 0 \pmod p. 
 \end{equation}

 Consider the set of
 \begin{equation}
 \label{eq:Bound N} 
 N=\sum_{i=1}^{2t} \binom{\ell-1}{i} (2\lambda)^i\le 2(2 \ell \lambda)^{2t}  
 \end{equation} 
 different polynomials of the form~\eqref{eq:polys}  with at most $2t$ nonzero 
 coefficients $a_i\in \{-\lambda,\ldots,\lambda\}$.
Since $\Phi_\ell(X)$ is irreducible over $\Q$ and for any $f(X)$  of the form~\eqref{eq:polys}  
we have  $\deg(f)\le \ell -2 < \deg(\Phi_\ell)$,  
 the $N$ resultants $R_f$ do not vanish over $\Q$ and their size $|R_f|$ is bounded by 
 $$
 |R_f|= \prod_{\xi:~\Phi_\ell(\xi)=0}|f(\xi)|\le (2t\lambda)^{\ell-1}.$$
So each $R_f$ has at most 
 \begin{equation}
 \label{eq:Bound R}
O\(\log  |R_f|\) = O\(\ell \log(t\lambda)\) 
 \end{equation}
prime divisors.
 Hence,  from~\eqref{eq:Bound N} and~\eqref{eq:Bound R}  we see that the set $\cS$ of primes $p$  that  satisfy~\eqref{eq:Res div} for at least one of the resultants 
 $R_f$ is of cardinality
 \begin{equation}
 \label{eq:Bound S}
S =   O\(\lambda^{2t} (2 \ell )^{2t+1} \log(t \lambda)\).
\end{equation}
On the other hand, we see from~\eqref{eq:asymp L} and~\eqref{eq:many p} that 
Algorithm~\ref{alg:cyclot} produces integers $m$ uniformly at random from a set of 
cardinality  
\begin{equation}
 \begin{split}
 \label{eq:Bound M}
 M +O(1)=    Q/\ell +O(1).
\end{split}
\end{equation}

Comparing~\eqref{eq:Bound S} with~\eqref{eq:Bound M}  we see that under the condition~\eqref{eq:KQ} we have
$S = o(M)$  which concludes the proof.    
 \end{proof} 
 
 \begin{rem}{\rm
Note that a subgroup $\cB$ of $\F_p^*$ of order $\ell$ is a $(1,\cA,p)$-packing set whenever $\cA$ contains at most one element from each coset of $\cB$. For example, if $\ell=(p-1)/2$, that is, 
  $\cB$ is 
  the subgroup of quadratic residues modulo $p$, then $\cB$ is a $(1,\{1,2\},p)$-packing set whenever $p\equiv \pm 3\bmod 8$, that is, whenever $2$ is a quadratic nonresidue modulo $p$.
Furthermore, for $t\ge 2$, recent advances towards the Waring problem in $\F_p$, 
see, for example,~\cite{CCP, CoPi,CHPS},  imply rather 
severe restrictions on the order $\ell$ of a subgroup $\cB$  of $\F_p^*$, 
  for which $\cB$ can be
 a $(t,\{1\},p)$-packing set. 
      }\end{rem}

\section*{Acknowledgements}

This work was initiated during a visit of the authors to the Paris Lodron University of Salzburg.
The authors are grateful for its support. 

During the preparation of this work the first author was  supported   by the ARC Grants~DP170100786 and DP180100201, 
and the second author was supported by the Austrian Science Fund FWF Project P~30405-N32. 
 
 The authors like to thank the anonymous referees for their valuable suggestions.

\end{document}